\def\qed{\rule{2mm}{2mm}}
\let\footnote=\endnote
\def\addlegendimage{\csname pgfplots@addlegendimage\endcsname}
\mathchardef\dash="2D
\newtheorem{theorem}{Theorem}[section]
\newtheorem{lemma}{Lemma}[section]
\newtheorem{assumption}{Assumption}[section]
\theoremstyle{definition}
\newtheorem{remark}{Remark}[section]
\begin{document}

\author{
Vishal Kamat\footnote{I am grateful to Ivan Canay for his valuable guidance and suggestions. I thank the Co-Editor, two anonymous referees, Matias Cattaneo, Joel Horowitz, Pedro Sant'Anna and Max Tabord-Meehan  for their helpful comments.}\\
Departments of Economics\\
Northwestern University\\
\url{v.kamat@u.northwestern.edu}
}

\title{On Nonparametric Inference in the Regression Discontinuity Design}

\maketitle

\begin{abstract}
This paper studies the validity of nonparametric tests used in the regression discontinuity design. The null hypothesis of interest is that the average treatment effect at the threshold in the so-called sharp design equals a pre-specified value. We first show that, under assumptions used in the majority of the literature, for \emph{any} test the power against any alternative is bounded above by its size. This result implies that, under these assumptions, any test with nontrivial power will exhibit size distortions. We next provide a sufficient strengthening of the standard assumptions under which we show that a novel test in the literature can control limiting size.
\end{abstract}

\noindent KEYWORDS: Regression discontinuity design, uniform testing.

\noindent JEL classification codes: C12, C14.

\section{Introduction}

The nonparametric literature on the regression discontinuity design (RDD) is characterized by the nonparametric identification of parameters at the threshold. In this paper we study constructing tests for these parameters, for which numerous alternatives are present in econometrics - see, for example, \cite{Mc08}, \cite{FFM12}, \cite{CCT14} and \cite{OXM15} for such tests, and see \cite{LL10} and \cite{IL08} for recent surveys on the literature. In particular, we focus on the null hypothesis that the average treatment effect at the threshold in the sharp design equals a pre-specified value.

When testing this null hypothesis in simulation studies (not reported), we observe that available tests fail to control the rejection probability under some null distributions with practical sample sizes. This failure occurs for distributions that satisfy the typically imposed assumptions, and in turn makes us question the reliability of current inference procedures. Here we hence formally study the construction of valid tests for our null hypothesis. As stated in Section \ref{sec:uniform}, the aim is to ideally construct a finite sample valid test, which requires the finite sample control of size, i.e. the null rejection probabilities. Since nontrivially achieving this may be too demanding, one may aim to approximate this finite sample goal in large samples through two different definitions of asymptotic validity. The first termed uniform asymptotic validity requires limiting control of null rejection probability uniformly across distributions under the null, whereas the second termed pointwise asymptotic validity requires such control to hold for each fixed distribution under the null. As highlighted in Remark \ref{rem:point}, current tests are shown to only satisfy the second definition, which may not provide any guarantee on the control of finite sample size. The practical importance of the distinction in these definitions has also been previously noted in various other econometric applications - see, for example, \cite{M07} and \cite{M12} for unit roots in autoregressive models, \cite{RS08} and \cite{DG09} for moment inequality models, \cite{LP05} and \cite{AG09} for post model selection and \cite{D97} and \cite{M10} for weak instrumental variable models. 

Our first result establishes that, under standard assumptions in the basic setup, any test for our null hypothesis of interest will have power against any alternative bounded above by its size. This implies that it is impossible to construct nontrivial finite sample valid tests and uniformly asymptotically valid tests under these assumptions. Intuitively, this result occurs because the assumptions permit a set of possible distributions that is `too large', in a sense made precise in Lemma \ref{lem:ML}. This causes distributions under the null and alternative to be `arbitrarily' close making it impossible to distinguish them given the data. Our goal through this impossibility result is not to criticize current nonparametric tests but to attempt to caution researchers using them. Such nonparametric tests are often viewed as appealing as they only require imposing mild regularity assumptions. We hope to convey that these assumptions however allow the permitted set of distributions to be arbitrarily large resulting in misleading inference. To recover reliable inference, the researcher would then naturally need to strengthen the assumptions to further restrict the permitted set of distributions. To this end, our second result illustrates a sufficient strengthening of the standard assumptions under which the \cite{CCT14} test is uniformly asymptotically valid. Our stronger assumptions are analogous to the ones commonly required for optimality results in nonparametric estimation; see, for example, \citet[][Chapter 24]{V98}.

In addition to the literature on RDD, this paper is also related to the growing one in econometrics on the testability of hypotheses. \cite{BS56} was the initial paper to demonstrate the impossibility of constructing nontrivial valid tests for the mean of a distribution. \cite{R04} extended this result to provide sufficient conditions to examine the testability of hypotheses in different settings. The key insight is formalizing the notion of closeness of the set of null and alternative distributions using the total variation metric. In econometrics, \cite{CSS13} verified one of these conditions to establish impossibility of constructing nontrivial valid tests for some hypotheses in nonparametric models with endogeneity. In this paper we verify the same condition, restated as Lemma \ref{lem:ML} here, to prove our impossibility result. Alternatively, \cite{G10,G10b} used a direct approach of considering sequences of distributions under the null to show limiting size distortions in the Hausman pretest. For further examples of such impossibility results see \cite{LL90}, \cite{LP08} and \cite{U08}, and for a review of such results in econometrics see \cite{D03}.

The remainder of the paper is organized as follows. Section \ref{sec:setup} describes the basic RDD setup, where we introduce the notation, the commonly imposed assumptions and the null hypothesis of interest. Section \ref{sec:uniform} states our testing problem. Section \ref{sec:results} illustrates our main results.

\section{Basic RDD Setup}\label{sec:setup}

Assume there are random variables $(Y(0),Y(1),Z)\sim Q \in \mathbf Q$, where $\mathbf{Q}$ is a set of distributions on a sample space $\mathcal{W} = \mathcal{Y} \times \mathcal{Y} \times \mathcal{Z} \subseteq \mathbf{R} \times \mathbf{R} \times \mathbf{R}$ such that $\mathcal{Z}$ contains a neighbourhood of zero. Here, let $Y(0)$ denote the potential outcome under treatment zero, $Y(1)$ denote the potential outcome under treatment one, and $Z$ denote an observed predetermined characteristic. The observed random variables from the experiment are $(Y,Z) \sim P \in \mathbf{P}$, where $\mathbf{P}$ is a set of distributions on a sample space $\mathcal{X} = \mathcal{Y} \times \mathcal{Z} \subseteq \mathbf{R} \times \mathbf{R}$. The observed outcome is determined by
\begin{align}\label{eq:RDDE}
Y = A \cdot Y(1) + (1 - A) \cdot Y(0)~,
\end{align}
where treatment assignment follows a normalized threshold rule of the form
\begin{align}\label{eq:RDDA}
A = 1 \{ Z \geq 0 \}~.
\end{align}
Since
\begin{align}
  (Y, Z) = M(Y(0),Y(1),Z)~,
\end{align}
where  $M:\mathcal{W} \to \mathcal{X}$ is the mapping implied by \eqref{eq:RDDE}, we have that $P = QM^{-1}$ and
\begin{align}\label{eq:push}
  \mathbf{P} = \{QM^{-1} : Q \in \mathbf{Q}\}~,
\end{align}
where  $M^{-1}$ is the pre-image of $M$. Let $W^{(n)}=\{(Y_i(0),Y_i(1),Z_i):1\le i\le n\}$ denote an i.i.d sample from $Q$, and let $X^{(n)}=\{(Y_i, Z_i) : 1 \leq i \leq n \}$ denote the corresponding observed i.i.d sample from $P$. Further, let $P^n$ denote the $n$-fold product $\bigotimes_{i=1}^n P$, i.e. the joint distribution of the observed data.

We next illustrate the standard assumptions and the resulting set of possible distributions $\mathbf{Q}$, which plays a fundamental role in our analysis. In order to do so, we introduce further notation. Let $\mu_{-}(z,Q)=E_{Q}[Y(0) | Z=z]$ and $\mu_{+}(z,Q) = E_{Q}[Y(1) | Z=z]$, and, whenever $\mu_{-}(\cdot,Q)$ and $\mu_{+}(\cdot,Q)$ have the appropriate level of differentiability,  let $\mu_{-}^{v}(z,Q)= d^{v} \mu_{-}(z,Q) / dz^{v}$ and $\mu_{+}^{v}(z,Q)= d^{v} \mu_{+}(z,Q) / dz^{v}$. Further, let $\sigma_{-}^{2}(z,Q) = Var_Q[Y(0) | Z=z]$ and $\sigma_{+}^{2}(z,Q) = Var_Q[Y(1) | Z=z]$, and let $f_{Q}(z)$ denote the density of $Z$. Using this notation, let 
\begin{equation}\label{eq:Qdef}
  \mathbf{Q} = \{ Q \in \mathbf{Q}_{\mathcal{W}} : Q \text{ satisfies Assumption \ref{ass:PC}} \}~,
\end{equation}
where $\mathbf{Q}_{\mathcal{W}}$ denotes the set of all Borel probability measures on $\mathcal{W}$ that have a density on $Z$ with respect to the Lebesgue measure, and Assumption \ref{ass:PC} is stated below. Assumption \ref{ass:PC}, in particular, captures the commonly imposed restrictions in the majority of the nonparametric RDD literature; see, for example, \cite{CCT14} and \cite{IK13}.

\begin{assumption}\label{ass:PC}
Let $Q$ be such that there exist real numbers $\kappa(Q) > 0$,  $L(Q) > 0$ and $U(Q) > 0$ where for all $z \in (-\kappa(Q),\kappa(Q))$, i.e. in a neighbourhood around the threshold, the following conditions hold true.
\begin{enumerate}[(i)]
\item $f_Q(z)$ is continuous and $ L(Q) \leq f_Q(z) \leq U(Q)$.
\item $E_{Q}\left[Y(0)^{4} | Z=z \right] \leq U(Q)$ and $E_{Q}\left[Y(1)^{4} | Z=z \right] \leq U(Q)$.
\item $\mu_{-}(z,Q)$ and $\mu_{+}(z,Q)$ are 3 times continuously differentiable, and $\lvert\mu_{-}^{v}(z,Q)\lvert \leq U(Q)$ and $\lvert\mu^{v}_{+}(z,Q)\lvert \leq U(Q)$ for $v = 1,2,3$. 
\item $\sigma_{-}^{2}(z,Q)$ and $\sigma_{+}^{2}(z,Q)$ are continuous, and  $L(Q) \leq \sigma_{-}^{2}(z,Q) \leq U(Q)$ and $L(Q) \leq \sigma_{+}^{2}(z,Q) \leq U(Q)$. 
\end{enumerate}
\end{assumption}

In this setting, our parameter of interest is the average treatment effect (ATE) at the threshold,
\begin{align}\label{eq:PoI}
\theta(Q) &= \mu_{+}(0,Q) - \mu_{-}(0,Q)~.
\end{align}
The above parameter is identified, as shown by \cite{HTV01}, using the distribution of the observed random variables by
\begin{align}\label{eq:PoIobs}
  \theta(P) = \lim_{z \to 0^{+}}\mu(z, P)  - \lim_{z \to 0^{-}}\mu(z, P)~,
\end{align}
where $\mu(z,P) = E_{P}[Y|Z=z]$. The hypotheses of interest can then be stated as
\begin{align}\label{eq:HTn}
H_0: P \in \mathbf{P}_{0}~ \text{   versus   }~
H_1: P \in \mathbf{P}_{1} = \mathbf{P} \setminus \mathbf{P}_{0}~,
\end{align}
where $\mathbf{P}_{0} = \left\{ P \in \mathbf{P} ~ |~ \theta(P) = \theta_0  \right\}$ is the subset of $\mathbf{P}$ for which the null hypothesis that the ATE at the threshold equals a pre-specified value of $\theta_0$ holds. 

\begin{remark}
To be concise, we focus on the ATE in the so-called sharp RDD (characterized by the treatment assignment rule in \eqref{eq:RDDA}). Our results in Section \ref{sec:results} will however follow with some manipulation for other parameters such as quantiles, and for parameters in other designs such as the kink RDD in \cite{CLW12} or the fuzzy RDD. 
\end{remark}

\section{Testing Problem}\label{sec:uniform}

The testing problem we study is to ideally construct a finite sample test $\phi = \phi(X^{(n)})$ for \eqref{eq:HTn}. A requirement of the test is that it controls size, which is said to be level $\alpha$ whenever
\begin{align}\label{eq:size}
\sup_{P \in \mathbf{P}_{0}} E_{P^n}\left[ \phi \right] \leq \alpha ~,
\end{align}
where $\alpha \in (0,1)$ is the chosen level of significance. Note that the above is a finite sample requirement, and to construct nontrivial tests that control size in finite samples may be too demanding. Alternatively, we also study the construction of a sequence of tests $\{ \phi_n \}_{n=1}^{\infty}$ that are required to control limiting size, i.e.
\begin{align}\label{eq:UAS}
\limsup_{n \to \infty}\sup_{P \in \mathbf{P}_{0}} E_{P^n}\left[ \phi_{n} \right] \leq \alpha ~,
\end{align}
and are said to be uniformly asymptotically level $\alpha$. As highlighted in Remark \ref{rem:point}, this requirement is in contrast to the one for pointwise asymptotically valid tests where \eqref{eq:UAS} is not required to hold uniformly across distributions in $\mathbf{P}_0$. 

In our results, we show that under the commonly imposed setup described in the previous section, it is impossible to construct nontrivial tests that satisfy \eqref{eq:size} or sequence of such asymptotically nontrivial tests that satisfy \eqref{eq:UAS}. We achieve this by illustrating that \eqref{eq:HTn} has the property such that for any test $\phi$ the power against any alternative is bounded above by its size, i.e.
\begin{align}\label{eq:testability}
\sup_{P \in \mathbf{P}_1} E_{P^n}\left[ \phi \right] \leq \sup_{P \in \mathbf{P}_0} E_{P^n}\left[ \phi \right] ~.
\end{align}
To prove this claim, we rely on an insightful result from \cite{R04} restated in the following lemma for clarity, where 
\begin{align}
\tau(P,P') = \sup_{\{g:|g| \leq 1\}} \left\lvert \int gdP - \int gdP' \right\rvert ~
\end{align}
denotes the total variation metric between any two distributions $P$ and $P'$. This lemma additionally formalizes the concept of what we mean by $\mathbf{P}$ (and hence $\mathbf{Q}$) being large in some sense.

\begin{lemma}\label{lem:ML}
Let $n \geq 1$ and $\phi$ be any test of $\mathbf{P}_0$ versus $\mathbf{P}_1$ in \eqref{eq:HTn}. If for every $P \in \mathbf{P}_1$ there exists a sequence $\{P_{k}\}_{k=1}^{\infty}$ in $\mathbf{P}_{0}$ such that $\tau(P,P_{k}) \to 0$ as $k \to \infty$, then
\begin{align}\label{eq:Rlem}
\sup_{P \in \mathbf{P}_1} E_{P^n}\left[ \phi \right] \leq \sup_{P \in \mathbf{P}_0} E_{P^n}\left[ \phi \right]~.
\end{align}
\end{lemma}

\section{Main Results}\label{sec:results}

\subsection{Testability in the Basic Setup}\label{sec:NT}

In the following theorem we establish that when $\mathbf{Q}$ is defined as in \eqref{eq:Qdef}, any test for \eqref{eq:HTn} will have power against any alternative bounded above by its size.
\begin{theorem}\label{th:1}
Let $n \geq 1$, $\mathbf{Q}$ be defined as in \eqref{eq:Qdef}, $\mathbf{P}$ be as in \eqref{eq:push} and $\mathbf{P}_0$ and $\mathbf{P}_1$ be as in \eqref{eq:HTn}. Then any test $\phi$ satisfies
\begin{align}\label{eq:result}
\sup_{P \in \mathbf{P}_1} E_{P^n}\left[ \phi \right] \leq \sup_{P \in \mathbf{P}_0} E_{P^n}\left[ \phi \right] ~.
\end{align}
\end{theorem}

\begin{proof}
Fix $P \in \mathbf{P}_1$ and take any strictly positive sequence $\{\epsilon_k\}_{k=1}^{\infty}$ such that $\epsilon_k \to 0$ as $k \to \infty$. Since \eqref{eq:push} implies that $P= Q M^{-1}$ for some $Q \in \mathbf{Q}$, it then follows from Assumption \ref{ass:PC} (i) that for every $k$ there exists a Borel set $B_k$ in $\mathcal{X}$,
\begin{align}\label{eq:Aset}
  B_k = \{(y,z) \in \mathcal{X} : z \in (-\tilde{\epsilon}_k, \tilde{\epsilon}_k) \}~,
\end{align}
where $\tilde{\epsilon}_k > 0$, such that $0 < P(B_k) \leq \epsilon_k$. Take next any $P' \in \mathbf{P}_0$ that has the same density on $Z$ as $P$. We may then construct the sequence $\{P_k\}_{k=1}^{\infty}$ such that for every Borel subset $B$ of $\mathcal{X}$ let
\begin{align}\label{eq:Pk}
  P_k(B) := P(B \cap B_k^c) + P'(B \cap B_k)~, 
\end{align}
where $B_k^c$ denotes the complement of $B_k$. One can verify that for every $k$ that $P_k$ is a well defined distribution.

Next, we show that $\{P_k\}_{k=1}^{\infty}$ is in $\mathbf{P}_0$, i.e. for every $k$ there exists $Q_k \in \mathbf{Q}$ such that $\theta(Q_k) = \theta_0$ and $P_k = Q_kM^{-1}$. To construct this $Q_k$, first note that $P = Q M^{-1}$ and $P' = Q' M^{-1}$ for some $Q \in \mathbf{Q}$ and $Q' \in \mathbf{Q}$ with $\theta(Q') = \theta_0$. Then for every Borel subset $\tilde{B}$ of $\mathcal{W}$ let
\begin{align}\label{eq:Qk}
  Q_k(\tilde{B}) := Q(\tilde{B} \cap \tilde{B}_{k}^{c}) + Q'(\tilde{B} \cap \tilde{B}_{k})~, 
\end{align}
where 
\begin{equation}
  \tilde{B}_{k} = M^{-1}(B_k) = \{ (y_0,y_1,z) \in \mathcal{W} : z \in (-\tilde{\epsilon}_k, \tilde{\epsilon}_k)  \}~,
\end{equation} and $\tilde{B}_{k}^c$ denotes the complement of $\tilde{B}_{k}$, which in this case is just $M^{-1}(B^{c}_{k})$. Analogous to $P_k$, it follows that $Q_k$ is a well defined distribution. To show that $Q_k \in \mathbf{Q}$, first note \eqref{eq:Qk} ensures that $Q_k(A)=Q'(A)$ for every Borel subset $A$ of $\mathcal{W}$ that satisfies $A \subseteq \tilde{B}_{k}$. This implies that the density and all the conditional on $Z=z$ quantities in Assumption \ref{ass:PC} are equal for $Q_k$ and $Q'$ for all $z \in (-\tilde{\epsilon}_k, \tilde{\epsilon}_k)$. In turn, it follows that $Q_k$ satisfies Assumption \ref{ass:PC} by taking $\kappa(Q_k) = \min\{\kappa(Q'), \tilde{\epsilon}_k \}$, $L(Q_k)=L(Q')$, and $U(Q_k) = U(Q')$. Further, by the same argument, it follows that $\theta(Q_k) = \theta_0$ as $\theta(Q') = \theta_0$. Finally, given that for every $B \subseteq \mathcal{X}$ and $\tilde{B} = M^{-1}(B)$ we have $\tilde{B} \cap \tilde{B}_{k}^{c} = M^{-1}(B \cap B_k^c)$ and $\tilde{B} \cap \tilde{B}_{k} = M^{-1}(B \cap B_k)$, we can establish $P_k = Q_kM^{-1}$ from \eqref{eq:Pk} and \eqref{eq:Qk}. 

To conclude, we show that the total variation distance between $P$ and $P_k$ goes to 0 as $k \to \infty$,
\begin{align}
  \tau(P, P_k) &= \sup_{\{g:|g| \leq 1\}}\left\lvert \int g dP - \int g dP_k \right\lvert = \sup_{\{g:|g| \leq 1\}}\left\lvert \int_{B_k} g dP - \int_{B_k} g dP' \right\lvert \nonumber  \\
                      &\leq \left\lvert \int_{B_k} dP \right\lvert + \left\lvert\int_{B_k} dP' \right\lvert  \leq 2\epsilon_k \to 0~,
\end{align}
where the second and fourth relations follow from \eqref{eq:Pk} and \eqref{eq:Aset} respectively, along with noting that $P$ and $P'$ have the same density for $Z$. Since $P \in \mathbf{P}_1$ was chosen arbitrarily, we can then invoke Lemma \ref{lem:ML} to conclude the proof.
\end{proof}

\begin{remark}
In invoking Lemma \ref{lem:ML} to prove Theorem \ref{th:1}, for any $P \in \mathbf{P}_1$ we construct a sequence $\{P_k\}_{k=1}^{\infty}$ in $\mathbf{P}_0$, such that for every $k$ there exists a Borel set in $\mathcal{X}$ with positive probability under $P_k$ where $P$ and $P_k$ differ, and are otherwise equal on the complement of this set. Letting the probability of this set vanish with $k$ implies that $\tau(P,P_k) \to 0$ as $k \to \infty$. Further, since Assumption \ref{ass:PC} only requires conditions local to zero that are pointwise in nature, we formally show in the proof that this ensures that our construction $\{P_k\}_{k=1}^{\infty}$ falls in $\mathbf{P}_0$. Note that our construction is not unique and that multiple others are possible.
\end{remark}

\begin{remark}
It is important emphasize that Theorem \ref{th:1} is not a criticism of a specific test but holds for any choice of test. Furthermore, it is a statement on the finite sample property of any test, but with important asymptotic implications. To be specific, for any sequence of tests $\{ \phi_n \}_{n=1}^{\infty}$ with nontrivial limiting power, it directly follows from \eqref{eq:result} that
 \begin{equation}
    \limsup_{n \to \infty} \sup_{P \in \mathbf{P}_0} E_{P^n} \left[ \phi_{n} \right] > \alpha~.
\end{equation}  
This additionally implies that if the sequence of tests is pointwise consistent in power, i.e. pointwise power converges to one, then limiting size is in fact equal to one.
\end{remark}

\begin{remark}\label{rem:point}
Currently used tests are shown to be only pointwise asymptotically valid, i.e.
\begin{equation}\label{eq:point}
  \limsup_{n \to \infty} E_{P^n} \left[ \phi_{n} \right] \leq \alpha \text{ for all } P \in \mathbf{P}_0~,
\end{equation}
which does not say anything about whether this sequence of tests $\{ \phi_n \}_{n=1}^{\infty}$ approximates \eqref{eq:size} for large enough $n$. To be specific, it
is possible that for every $n \geq 1$ there exists $P \in \mathbf{P}_0$ such that 
\begin{equation}
E_{P^n}[\phi_n] >> \alpha~.
\end{equation}\end{remark}
\vspace{-0.45cm}
\subsection{Uniformly Valid Test under Stronger Assumptions}\label{sec:UTRDD}

In this section, we ask under what alternative assumptions we can construct an uniformly asymptotically valid test. We consider, in particular, a natural strengthening of Assumption \ref{ass:PC} leading to the following alternative definition of the set of possible distributions,
\begin{equation}\label{eq:Qdef2}
  \mathbf{Q} = \{ Q \in \mathbf{Q}_{\mathcal{W}} : Q \text{ satisfies Assumption \ref{ass:2}} \}~,
\end{equation}
where as before $\mathbf{Q}_{\mathcal{W}}$ denotes the set of all Borel probability measures on $\mathcal{W}$ that have a density on $Z$ with respect to the Lebesgue measure, and Assumption \ref{ass:2} is stated below. Note that if $Q$ satisfies Assumption \ref{ass:2} then it satisfies Assumption \ref{ass:PC}, and hence the definition of $\mathbf{Q}$ in \eqref{eq:Qdef2} generates a smaller set of distributions than the definition of $\mathbf{Q}$ in \eqref{eq:Qdef}.
\begin{assumption}\label{ass:2}
Let $Q$ be such that it satisfies Assumption \ref{ass:PC} with $\kappa(Q) = \tilde{\kappa}$, $L(Q) = \tilde{L}$ and $U(Q) = \tilde{U}$, where $\tilde{\kappa} > 0$, $\tilde{L} > 0$ and $\tilde{U} > 0$ are real numbers that do not depend on $Q$.
\end{assumption}

We next briefly describe a simple version of the \cite{CCT14} test (referred to as CCT hereafter), which is demonstrated to satisfy \eqref{eq:UAS} under this smaller set of distributions. For the null hypothesis in \eqref{eq:HTn}, the CCT test statistic is
\begin{align}\label{eq:TS}
T_n^{CCT}(X^{(n)}) = \frac{\hat{\theta}_n - \theta_0}{\hat{S}_n}~,
\end{align}
where $\hat{\theta}_n$ is a bias corrected local linear estimator of $\theta(P)$, and $\hat{S}_n$ is a plug-in estimator of a novel standard error formula that accounts for the variance of the bias estimate. The bias is estimated using a local quadratic estimator. Furthermore, for all estimates, we use the triangular kernel and a deterministic sequence of bandwidth choices denoted by $h_n$. Then, the CCT level $\alpha$ test is 
\begin{align}\label{eq:CCTP}
\phi_n^{CCT}(X^{(n)}) = 
\begin{cases}
& 1 \text{ if } |T_n^{CCT}(X^{(n)})| > z_{1 - \alpha/2} \\
& 0 \text{ otherwise}
\end{cases}~,
\end{align}
where $z_{1 - \alpha/2}$ is the $(1-\alpha/2)$-quantile of the standard normal distribution. 

The following theorem demonstrates that under the alternative definition of $\mathbf{Q}$ in \eqref{eq:Qdef2}, the test statistic in \eqref{eq:TS} for \eqref{eq:HTn} asymptotically converges uniformly in $\mathbf{P}_0$ to the standard normal distribution. It then directly follows that the test in \eqref{eq:CCTP} is uniformly asymptotically level $\alpha$, and, in fact, has limiting size equal to $\alpha$.

\begin{theorem}\label{th:UN}
Let $\mathbf{Q}$ be defined as in \eqref{eq:Qdef2}, $\mathbf{P}$ be as in \eqref{eq:push} and $\mathbf{P}_0$ and $\mathbf{P}_1$ be as in \eqref{eq:HTn}. If $nh_n \to \infty$, $h_n \to 0$ and $nh_n^{7} \to 0$, then the CCT test statistic from \eqref{eq:TS} satisfies
\begin{align}
T^{CCT}_n(X^{(n)}) = \frac{\hat{\theta}_n - \theta_0}{\hat{S}_n} \xrightarrow{d} \mathcal{N}(0,1)
\end{align}
as $n \to \infty$, where $X^{(n)}$ are i.i.d $P_n$ and $P_n$ is any sequence of distributions such that $P_n \in \mathbf{P}_0$ for all $n \geq 1$. This in turn implies that $\{ \phi_n^{CCT} \}_{n=1}^{\infty}$ in \eqref{eq:CCTP} is uniformly asymptotically level $\alpha$,  and, in fact, has limiting size equal to $\alpha$, i.e.
\begin{align}
\limsup_{n \to \infty} \sup_{P \in \mathbf{P}_0} E_{P^n} [ \phi_{n}^{CCT} ] = \alpha~.
\end{align}
\end{theorem}
The proof of the above essentially requires slightly altering the proof of the pointwise result in \cite{CCT14} to any sequence of distributions $P_n$ such that $P_n \in \mathbf{P}_0$ for all $n \geq 1$. For completeness, we provide a proof in the supplement appendix.

\begin{remark}
Note that when $\mathbf{Q}$ is defined as in \eqref{eq:Qdef2} the arguments used to prove Theorem \ref{th:1} do not go through. In particular, the constructed sequence $\{P_k\}_{k=1}^{\infty}$ in \eqref{eq:Pk} will not fall in $\mathbf{P}$, as the corresponding $\{Q_k\}_{k=1}^{\infty}$ in \eqref{eq:Qk} will not fall in $\mathbf{Q}$. To see why, note that for large enough $k$ we have $\kappa(Q_k) < \tilde{\kappa}$, and either $\mu_{-}(z,Q_k)$ or $\mu_{+}(z,Q_k)$ is discontinuous at either $z= -\kappa(Q_k)$ or $z= \kappa(Q_k)$. This implies $Q_k$ will not satisfy Assumption \ref{ass:2} as $\mu_{-}(z,Q_k)$ or $\mu_{+}(z,Q_k)$ will not be continuous for all $z \in (-\tilde{\kappa},\tilde{\kappa})$,. Intuitively, Assumption \ref{ass:2} excludes extreme sequences such as $\{Q_k\}_{k=1}^{\infty}$ for which nonparametric tools work poorly to give a uniform limit result. For recent additional results on uniform testing in RDD, see \cite{AK16} and \cite{CCF16}.
\end{remark}

\clearpage
\bibliography{references}

\begin{thebibliography}{29}
\expandafter\ifx\csname natexlab\endcsname\relax\def\natexlab#1{#1}\fi
\expandafter\ifx\csname url\endcsname\relax
  \def\url#1{\texttt{#1}}\fi
\expandafter\ifx\csname urlprefix\endcsname\relax\def\urlprefix{URL }\fi
\providecommand{\eprint}[2][]{\url{#2}}

\bibitem[{Andrews and Guggenberger(2009{\natexlab{a}})}]{AG09}
\textsc{Andrews, D.~W.} and \textsc{Guggenberger, P.} (2009{\natexlab{a}}).
\newblock Incorrect asymptotic size of subsampling procedures based on
  post-consistent model selection estimators.
\newblock \textit{Journal of Econometrics}, \textbf{152} 19 -- 27.

\bibitem[{Andrews and Guggenberger(2009{\natexlab{b}})}]{DG09}
\textsc{Andrews, D.~W.} and \textsc{Guggenberger, P.} (2009{\natexlab{b}}).
\newblock Validity of subsampling and Òplug-in asymptoticÓ inference for
  parameters defined by moment inequalities.
\newblock \textit{Econometric Theory}, \textbf{25} 669--709.

\bibitem[{Armstrong and Kolesar(2016)}]{AK16}
\textsc{Armstrong, T.} and \textsc{Kolesar, M.} (2016).
\newblock Optimal inference in a class of regression models.
\newblock Manuscript.

\bibitem[{Bahadur and Savage(1956)}]{BS56}
\textsc{Bahadur, R.~R.} and \textsc{Savage, L.~J.} (1956).
\newblock The nonexistence of certain statistical procedures in nonparametric
  problems.
\newblock \textit{Ann. Math. Statist.}, \textbf{27} 1115--1122.

\bibitem[{Calonico et~al.(2016)Calonico, Cattaneo and Farrell}]{CCF16}
\textsc{Calonico, S.}, \textsc{Cattaneo, M.~D.} and \textsc{Farrell, M.~H.}
  (2016).
\newblock Coverage error optimal confidence intervals for regression
  discontinuity designs.
\newblock Manuscript.

\bibitem[{Calonico et~al.(2014)Calonico, Cattaneo and Titiunik}]{CCT14}
\textsc{Calonico, S.}, \textsc{Cattaneo, M.~D.} and \textsc{Titiunik, R.}
  (2014).
\newblock Robust nonparametric confidence intervals for
  regression-discontinuity designs.
\newblock \textit{Econometrica}, \textbf{82} 2295--2326.

\bibitem[{Canay et~al.(2013)Canay, Santos and Shaikh}]{CSS13}
\textsc{Canay, I.~A.}, \textsc{Santos, A.} and \textsc{Shaikh, A.~M.} (2013).
\newblock On the testability of identification in some non-parametric models
  with endogeneity.
\newblock \textit{Econometrica}, \textbf{81} 2535--2559.

\bibitem[{Card et~al.(2015)Card, Lee, Pei and Weber}]{CLW12}
\textsc{Card, D.}, \textsc{Lee, D.~S.}, \textsc{Pei, Z.} and \textsc{Weber, A.}
  (2015).
\newblock Inference on causal effects in a generalized regression kink design.
\newblock \textit{Econometrica}, \textbf{83} 2453--2483.

\bibitem[{Dufour(1997)}]{D97}
\textsc{Dufour, J.-M.} (1997).
\newblock Some impossibility theorems in econometrics with applications to
  structural and dynamic models.
\newblock \textit{Econometrica}, \textbf{65} pp. 1365--1387.

\bibitem[{Dufour(2003)}]{D03}
\textsc{Dufour, J.-M.} (2003).
\newblock Identification, weak instruments, and statistical inference in
  econometrics.
\newblock \textit{Canadian Journal of Economics/Revue canadienne
  d'économique}, \textbf{36} 767--808.

\bibitem[{Frandsen et~al.(2012)Frandsen, Frölich and Melly}]{FFM12}
\textsc{Frandsen, B.~R.}, \textsc{Frölich, M.} and \textsc{Melly, B.} (2012).
\newblock Quantile treatment effects in the regression discontinuity design.
\newblock \textit{Journal of Econometrics}, \textbf{168} 382 -- 395.

\bibitem[{Guggenberger(2010{\natexlab{a}})}]{G10}
\textsc{Guggenberger, P.} (2010{\natexlab{a}}).
\newblock The impact of a hausman pretest on the asymptotic size of a
  hypothesis test.
\newblock \textit{Econometric Theory}, \textbf{26} 369--382.

\bibitem[{Guggenberger(2010{\natexlab{b}})}]{G10b}
\textsc{Guggenberger, P.} (2010{\natexlab{b}}).
\newblock The impact of a hausman pretest on the size of a hypothesis test: The
  panel data case.
\newblock \textit{Journal of Econometrics}, \textbf{156} 337 -- 343.

\bibitem[{Hahn et~al.(2001)Hahn, Todd and Van~der Klaauw}]{HTV01}
\textsc{Hahn, J.}, \textsc{Todd, P.} and \textsc{Van~der Klaauw, W.} (2001).
\newblock Identification and estimation of treatment effects with a
  regression-discontinuity design.
\newblock \textit{Econometrica}, \textbf{69} 201--209.

\bibitem[{Imbens and Kalyanaraman(2012)}]{IK13}
\textsc{Imbens, G.} and \textsc{Kalyanaraman, K.} (2012).
\newblock Optimal bandwidth choice for the regression discontinuity estimator.
\newblock \textit{The Review of Economic Studies}, \textbf{79} 933--959.

\bibitem[{Imbens and Lemieux(2008)}]{IL08}
\textsc{Imbens, G.~W.} and \textsc{Lemieux, T.} (2008).
\newblock Regression discontinuity designs: A guide to practice.
\newblock \textit{Journal of econometrics}, \textbf{142} 615--635.

\bibitem[{Lee and Lemieux(2010)}]{LL10}
\textsc{Lee, D.~S.} and \textsc{Lemieux, T.} (2010).
\newblock Regression discontinuity designs in economics.
\newblock \textit{Journal of Economic Literature}, \textbf{48} 281--355.

\bibitem[{Leeb and P{\"o}tscher(2005)}]{LP05}
\textsc{Leeb, H.} and \textsc{P{\"o}tscher, B.~M.} (2005).
\newblock Model selection and inference: Facts and fiction.
\newblock \textit{Econometric Theory} 21--59.

\bibitem[{Leeb and P{\"o}tscher(2008)}]{LP08}
\textsc{Leeb, H.} and \textsc{P{\"o}tscher, B.~M.} (2008).
\newblock Can one estimate the unconditional distribution of
  post-model-selection estimators?
\newblock \textit{Econometric Theory}, \textbf{24} 338--376.

\bibitem[{Lehmann and Loh(1990)}]{LL90}
\textsc{Lehmann, E.~L.} and \textsc{Loh, W.-Y.} (1990).
\newblock Pointwise versus uniform robustness of some large-sample tests and
  confidence intervals.
\newblock \textit{Scandinavian Journal of Statistics}, \textbf{17} pp.
  177--187.

\bibitem[{McCrary(2008)}]{Mc08}
\textsc{McCrary, J.} (2008).
\newblock Manipulation of the running variable in the regression discontinuity
  design: A density test.
\newblock \textit{Journal of Econometrics}, \textbf{142} 698 -- 714.

\bibitem[{Mikusheva(2007)}]{M07}
\textsc{Mikusheva, A.} (2007).
\newblock Uniform inference in autoregressive models.
\newblock \textit{Econometrica}, \textbf{75} 1411--1452.

\bibitem[{Mikusheva(2010)}]{M10}
\textsc{Mikusheva, A.} (2010).
\newblock Robust confidence sets in the presence of weak instruments.
\newblock \textit{Journal of Econometrics}, \textbf{157} 236 -- 247.

\bibitem[{Mikusheva(2012)}]{M12}
\textsc{Mikusheva, A.} (2012).
\newblock One-dimensional inference in autoregressive models with the potential
  presence of a unit root.
\newblock \textit{Econometrica}, \textbf{80} 173--212.

\bibitem[{M{\"u}ller(2008)}]{U08}
\textsc{M{\"u}ller, U.~K.} (2008).
\newblock The impossibility of consistent discrimination between i(0) and i(1)
  processes.
\newblock \textit{Econometric Theory}, \textbf{24} 616--630.

\bibitem[{Otsu et~al.(2015)Otsu, Xu and Matsushita}]{OXM15}
\textsc{Otsu, T.}, \textsc{Xu, K.-L.} and \textsc{Matsushita, Y.} (2015).
\newblock Empirical likelihood for regression discontinuity design.
\newblock \textit{Journal of Econometrics}, \textbf{186} 94 -- 112.

\bibitem[{Romano(2004)}]{R04}
\textsc{Romano, J.~P.} (2004).
\newblock On non-parametric testing, the uniform behaviour of the t-test, and
  related problems.
\newblock \textit{Scandinavian Journal of Statistics}, \textbf{31} 567--584.

\bibitem[{Romano and Shaikh(2008)}]{RS08}
\textsc{Romano, J.~P.} and \textsc{Shaikh, A.~M.} (2008).
\newblock Inference for identifiable parameters in partially identified
  econometric models.
\newblock \textit{Journal of Statistical Planning and Inference}, \textbf{138}
  2786--2807.

\bibitem[{van~der Vaart(1998)}]{V98}
\textsc{van~der Vaart, A.~W.} (1998).
\newblock \textit{Asymptotic Statistics}.
\newblock Cambridge University Press, Cambridge.

\end{thebibliography}


\begin{thebibliography}{2}
\expandafter\ifx\csname natexlab\endcsname\relax\def\natexlab#1{#1}\fi
\expandafter\ifx\csname url\endcsname\relax
  \def\url#1{\texttt{#1}}\fi
\expandafter\ifx\csname urlprefix\endcsname\relax\def\urlprefix{URL }\fi
\providecommand{\eprint}[2][]{\url{#2}}

\bibitem[{Calonico et~al.(2014{\natexlab{a}})Calonico, Cattaneo and
  Titiunik}]{CCT14}
\textsc{Calonico, S.}, \textsc{Cattaneo, M.~D.} and \textsc{Titiunik, R.}
  (2014{\natexlab{a}}).
\newblock Robust nonparametric confidence intervals for
  regression-discontinuity designs.
\newblock \textit{Econometrica}, \textbf{82} 2295--2326.

\bibitem[{Calonico et~al.(2014{\natexlab{b}})Calonico, Cattaneo and
  Titiunik}]{CCT14supp}
\textsc{Calonico, S.}, \textsc{Cattaneo, M.~D.} and \textsc{Titiunik, R.}
  (2014{\natexlab{b}}).
\newblock Supplement to ``robust nonparametric confidence intervals for
  regression-discontinuity designs".
\newblock \textit{Econometrica Supplement Material}, \textbf{82}.

\end{thebibliography}
\nocite{*}

\clearpage

\end{document}



\author{
Vishal Kamat\\
Departments of Economics\\
Northwestern University\\
\url{v.kamat@u.northwestern.edu}
}

\title{Supplement Appendix to ``On Nonparametric Inference in the Regression Discontinuity Design''}

\maketitle

\begin{abstract}
This documents provides a proof to Theorem 4.2 in the author's paper ``On Nonparametric Inference in the Regression Discontinuity Design''.
\end{abstract}

\noindent KEYWORDS: Regression discontinuity design, uniform testing.

\noindent JEL classification codes: C12, C14.

\appendix
\renewcommand{\theequation}{\Alph{section}-\arabic{equation}}
\small
\section{Additional Notation}

Let $Z^{(n)} = \{ Z_i : 1 \leq i \leq n\}$ denote the observed sample of the random variable $Z$. Let $a_n \precsim b_n$ denote $a_n \leq A b_n$, where $a_n$ and $b_n$ are deterministic sequences and $A$ is a positive constant uniform in $\mathbf{P}$. Let $| \cdot|$ denote the Euclidean matrix norm. As we use the notion of convergence in probability under the sequence of distributions $P_n$, let $A_n = o_{P_n}(1)$ denote
\begin{equation*}
   P_n(|A_n| > \epsilon) \to 0 \text{ as } n \to \infty~,
\end{equation*}
for a sequences of random variables $A_n \sim P_n$, where $\epsilon$ is any constant such that $\epsilon > 0$. Further, in Table \ref{tab:notation} below, we introduce additional notation to keep our arguments concise.
\begin{longtable}{r l}
$H(h_n)$ & diag(1,$h_n^{-1}$,$h_n^{-2}$) \\
$r(Z_i/h_n)$ & $(1,Z_i/h_n,(Z_i/h_n)^2)'$ \\
$Z_n(h_n)$   & $(r(Z_1/h_n), \ldots, r(Z_n/h_n))'$ \\
$k(u)$          & $(1-u)1\{ 0 \leq u \leq 1  \}$ \\
$K(u)$          & $k(-u) 1\{ u < 0\} + k(u) 1\{ u \geq 0 \}$ \\
$K_{h_n}(u)$ & $K(u/h_n)/h_n$ \\
$W_{+,n}(h_n)$  & diag$\left(1\{Z_1 \geq 0\}K_{h_n}(Z_1), \ldots, 1\{Z_n \geq 0\}K_{h_n}(Z_n)\right)$  \\
$\Gamma_{+,n}(h_n)$ & $Z_n(h_n)'W_{+,n}(h_n)Z_n(h_n)/n$ \\
$S_n(h_n)$       & $((Z_1/h_n)^3, \ldots, (Z_n/h_n)^3)'$ \\
$\nu_{+,n}$      & $Z_n(h_n)'W_{+,n}(h_n)S_n(h_n)/n$  \\
$e$              & $(1,0,0)'$  \\
$\mu(z,P)$  & $E_{P}[Y|Z=z]$ \\
$\mu_+(P)$ &  $\lim_{z \to 0^+}\mu(z,P)$ \\
$\mu_-(P)$ & $\lim_{z \to 0^-}\mu(z,P)$ \\
$\mu^v(z,P)$ & $d^v \mu(z,P)/dz^v$ \\
$\mu^v_+(P)$ & $\lim_{z \to 0^+} \mu^v(z,P)$ \\
$\sigma^2(z,P)$ & $Var_{P}[Y | Z=z]$ \\
$\Sigma_n(P)$  & diag$(\sigma^2(Z_1,P), \ldots, \sigma^2(Z_n,P))$\\
$\Psi_{+,n}(h_n,P)$ & $Z_n(h_n)'W_{+,n}(h_n) \Sigma_n(P) W_{+,n}(h_n) Z_n(h_n) / n$ \\
$\mathbf{Y}_n$              & $(Y_1, \ldots, Y_n)'$ \\
$\hat{\beta}_{+,n}$ & $H(h_n) \Gamma^{-1}_{+,n}(h_n) Z_n(h_n)'W_{+,n}(h_n) \mathbf{Y}_n /n$ \\
\caption{Important Notation}
\label{tab:notation}
\end{longtable}

Next, we provide an extended description of the test statistic used. For our null hypothesis as stated in the paper, the test statistic can be rewritten as
\begin{equation}\label{eq:TS}
  T_n^{CCT} = \frac{\hat{\mu}_{+,n} + \hat{\mu}_{-,n} - \left( \mu_+(P) - \mu_-(P) \right)}{\hat{S}_n}~,
\end{equation}
where $\mu_+(P) - \mu_-(P) = \theta_0$, $\hat{\mu}_{+,n}$ and $\hat{\mu}_{-,n}$ are bias corrected local linear estimates of $\mu_+(P)$ and $\mu_-(P)$, and $$\hat{S}_n = \sqrt{\hat{V}_{+,n} + \hat{V}_{-,n}}~,$$ where $\hat{V}_{+,n}$ and $\hat{V}_{+,n}$ are plug-in estimated conditional on $Z^{(n)}$ variances of $\hat{\mu}_{+,n} $ and $\hat{\mu}_{-,n}$; see \eqref{eq:vest} for the plug-in estimator used. The bias  of both estimates are estimated using local quadratic estimators. Furthermore, for all estimates, we use the triangular kernel, i.e. $k(u)$ in Table \ref{tab:notation}, and a deterministic sequence of bandwidth choices denoted by $h_n$. Throughout this document, we provide results for quantities with subscript $(+)$ as arguments for those with subscript $(-)$ follow symmetrically. In addition, as noted in \citet[][Remark 7]{CCT14}, we exploit the fact that in our simple version of the test statistic the estimates are numerically equivalent to those from a non bias corrected local quadratic estimator. In turn, we can write
\begin{equation}
  \hat{\mu}_{+,n} = e' \hat{\beta}_{+,n}~,
\end{equation}
which reduces the length of the proof presented below. Further, as stated in the paper, note that
\begin{align}\label{eq:Qdef}
  \mathbf{Q} = \{ Q \in \mathbf{Q}_{\mathcal{W}} : Q \text{ satisfies Assumption 4.1} \}~,
\end{align}
and that
\begin{align}\label{eq:Pdef}
  \mathbf{P} = \{ QM^{-1} : Q \in \mathbf{Q}\}~,
\end{align}
where $\mathbf{Q}_{\mathcal{W}}$, $M^{-1}$ and Assumption 4.1 are as defined in the paper.

\section{Auxiliary Lemmas}

\begin{lemma}\label{lem:quantities}
 Let $\mathbf{Q}$ be defined as in \eqref{eq:Qdef}, $\mathbf{P}$ be as in \eqref{eq:Pdef} and $P_n \in \mathbf{P}$ for all $n \geq 1$. If $nh_n \to \infty$ and $h_n \to 0$, then
 \begin{enumerate}[(i)]
   \item $\Gamma_{+,n}(h_n) = \tilde{\Gamma}_{+,n}(h_n) + o_{P_n}(1)$, where $\tilde{\Gamma}_{+,n}(h_n) = \int_{0}^{\infty} K(u) r(u) r(u)' f_{P_n}(uh_n) du \in [\Gamma_L, \Gamma_U]$ .
   \item $\nu_{+,n}(h_n) = \tilde{\nu}_{+,n}(h_n) + o_{P_n}(1)$, where $\tilde{\nu}_{+,n}(h_n) = \int^{\infty}_{0} K(u) r(u) u^2 f_{P_n}(uh_n) du \in [\nu_L , \nu_U]$.
   \item $h_n \Psi_{+,n}(h_n,P_n) = \tilde{\Psi}_{+,n}(h_n) + o_{P_n}(1)$, where $\tilde{\Psi}_{+,n}(h_n) = \int^{\infty}_{0} K(u)^2 r(u) r(u)' \sigma^2_{P_n}(uh_n) f_{P_n}(uh_n) du \in [\Psi_L, \Psi_U]$.
 \end{enumerate}
\end{lemma}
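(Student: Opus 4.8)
The three parts share a common structure, so the plan is to treat them in parallel through a mean-plus-remainder decomposition. Fix a sequence $P_n \in \mathbf{P}$. Conditionally on the distribution $P_n$, each of $\Gamma_{+,n}(h_n)$, $\nu_{+,n}(h_n)$ and $h_n\Psi_{+,n}(h_n,P_n)$ is an average $n^{-1}\sum_{i=1}^{n} g_n(Z_i)$ of i.i.d.\ matrix- or vector-valued functions of $Z_i$ alone --- in particular $\Psi_{+,n}(h_n,P_n)$ involves the sample only through $Z^{(n)}$, since $\Sigma_n(P_n)$ is the deterministic diagonal matrix with entries $\sigma^2(Z_i,P_n)$. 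For part (i) one has $g_n(z)=1\{z\geq 0\}K_{h_n}(z)\,r(z/h_n)r(z/h_n)'$; for part (iii), $g_n(z)=1\{z\geq 0\}\,h_n K_{h_n}(z)^2\,\sigma^2(z,P_n)\,r(z/h_n)r(z/h_n)'$; and for part (ii), $g_n$ is read off from the definition of $\nu_{+,n}$, the only change relative to (i) being that the rank-one matrix $r(z/h_n)r(z/h_n)'$ is replaced by the vector $r(z/h_n)$ times the relevant power of $z/h_n$.

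\textbf{Mean.} First I would show that $E_{P_n}[g_n(Z)]$ equals the target \emph{exactly}. Writing the expectation as an integral against $f_{P_n}$ and substituting $u=z/h_n$, the Jacobian $h_n$ cancels the factor $h_n^{-1}$ in $K_{h_n}(z)=h_n^{-1}K(z/h_n)$ (and in $h_n K_{h_n}(z)^2=h_n^{-1}K(z/h_n)^2$), leaving precisely $\tilde\Gamma_{+,n}(h_n)$, $\tilde\nu_{+,n}(h_n)$ and $\tilde\Psi_{+,n}(h_n)$ respectively. Hence only the fluctuation about the mean remains to be controlled.

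\textbf{Remainder and bracket.} For the remainder I would exploit that the triangular kernel is bounded and supported on $[-1,1]$: on the event $g_n(Z_i)\neq 0$ one has $Z_i\in[0,h_n]$ and every entry of $r(Z_i/h_n)$ lies in $[0,1]$; together with $\|K_{h_n}\|_\infty\precsim h_n^{-1}$ and, for (iii), the uniform upper bound on $\sigma^2(\cdot,P)$ from Assumption 4.1, this gives the entrywise bound $|g_n(z)|\precsim h_n^{-1}1\{0\leq z\leq h_n\}$ with a constant uniform in $P\in\mathbf{P}$. Using in addition the uniform upper bound on $f_P$ near the cutoff, $E_{P_n}[\,|g_n(Z)|^2\,]\precsim h_n^{-2}\,P_n(0\leq Z\leq h_n)\precsim h_n^{-1}$, so each scalar entry of the average has variance $\precsim (nh_n)^{-1}\to 0$; Chebyshev's inequality applied entrywise then yields $\Gamma_{+,n}(h_n)-\tilde\Gamma_{+,n}(h_n)=o_{P_n}(1)$, and similarly for (ii) and (iii). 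For the bracket statements I would use that $K\geq 0$, $K^2\geq 0$, each monomial $u^j\geq 0$ on $[0,\infty)$, and $r(u)r(u)'$ is positive semidefinite, so the integrands are monotone in the scalar weights $f_{P_n}(uh_n)$ and $\sigma^2(uh_n,P_n)f_{P_n}(uh_n)$; since $K$ is supported on $[0,1]$ and $h_n\to 0$, for $n$ large the argument $uh_n$ lies in the fixed neighbourhood of the cutoff on which Assumption 4.1 supplies constants $0<\underline f\leq f_P\leq\overline f$ and $0<\underline\sigma^2\leq\sigma^2(\cdot,P)\leq\overline\sigma^2$, uniformly in $P$. Replacing the weights by these constants and pulling them out sandwiches $\tilde\Gamma_{+,n}(h_n)$ between $\underline f\int_0^\infty K(u)r(u)r(u)'\,du$ and $\overline f\int_0^\infty K(u)r(u)r(u)'\,du$ in the Loewner order, and analogously for $\tilde\nu_{+,n}$ and $\tilde\Psi_{+,n}$; these bounding matrices are fixed and positive definite because $1,u,u^2$ are linearly independent on $[0,1]$, and they serve as $\Gamma_L,\Gamma_U$, and so on.

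\textbf{Main difficulty.} No individual step is hard; the real work --- and the place where Assumption 4.1 is essential --- is checking that every constant generated along the way is \emph{uniform over the entire class} $\mathbf{P}$, i.e.\ that there is one fixed neighbourhood of the cutoff carrying uniform two-sided bounds on the design density and on the conditional variance. Only then do the $o_{P_n}(1)$ conclusions survive evaluation along worst-case sequences $P_n$. The matrix/vector character of the statements is handled simply by arguing coordinate by coordinate.
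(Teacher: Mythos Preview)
Your proposal is correct and follows essentially the same route as the paper: compute the mean exactly via the change of variables $u=z/h_n$, bound the variance of the sample average by $O((nh_n)^{-1})$ using the compact support of $K$ and the uniform density/variance bounds from Assumption~4.1, and conclude via Chebyshev/Markov; the bracket statements come from sandwiching $f_{P_n}$ (and $\sigma^2 f_{P_n}$) by the uniform constants on the shrinking window. The paper writes out only part~(i) in detail and declares (ii)--(iii) analogous, so your slightly more explicit treatment of all three and your emphasis on uniformity over $\mathbf{P}$ are in the same spirit.
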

\begin{proof}
For (i), a change of variables gives us
\begin{align*}
  E_{P_n^n}[\Gamma_{+,n}(h_n)] &= E_{P_n}\left[ \frac{1}{nh_n} \sum^{n}_{i=1} 1\{ Z_i \geq 0\} K(Z_i / h_n) r(Z_i / h_n) r(Z_i / h_n)' \right] \\
      & = \frac{1}{h_n} \int^{\infty}_{0} K(z/h_n) r(z/h_n)r(z/h_n)'f_{P_n}(z) dz \\
      & = \int_{0}^{\infty} K(u) r(u) r(u)' f_{P_n}(uh_n) \equiv \tilde{\Gamma}_{+,n}(h_n)~.
\end{align*}
Further, since $h_n < \tilde{\kappa}$ for large enough $n$, we have that $\tilde{L} \leq f_{P_n}(z) \leq \tilde{U}$ by Assumption 4.1, which implies that
\begin{align*}
  \Gamma_L \equiv \tilde{L} \int_{0}^{\infty} K(u) r(u) r(u)' du \leq \tilde{\Gamma}_{+,n}(h_n) \leq \tilde{U} \int_{0}^{\infty} K(u) r(u) r(u)' du \equiv \Gamma_U~,
\end{align*}
and that
\begin{align*}
  E_{P_n^n}[ | \Gamma_{+,n}(h_n) - E_{P_n}[\Gamma_{+,n}(h_n)] |^2] & \leq \frac{1}{h_n^2} E_{P_n}\left[\left|  1\{Z_i \geq 0\} K(Z_i / h_n) r(Z_i / h_n) r(Z_i / h_n)' \right|^2 \right] \\
    & = \frac{1}{n h_n}\int^{\infty}_{0} K(u)^2 |r(u)|^4 f_{P_n}(uh_n) du \\
    & \leq \frac{\tilde{U}}{n h_n}\int^{\infty}_{0} K(u)^2 |r(u)|^4  du \\
    & = O(n^{-1}h_n^{-1}) = o(1)~.
\end{align*}
It then follows by Markov's Inequality that $\Gamma_{+,n}(h_n) = \tilde{\Gamma}_{+,n}(h_n) + o_{P_n}(1)$. Analogously, closely following \citet[][Lemma S.A.1]{CCT14supp}, we can show Lemma \ref{lem:quantities}(ii)-(iii).
\end{proof}

\begin{lemma}\label{lem:terms}
   Let $\mathbf{Q}$ be defined as in \eqref{eq:Qdef}, $\mathbf{P}$ be as in \eqref{eq:Pdef} and $P_n \in \mathbf{P}$ for all $n \geq 1$. If $nh_n \to \infty$ and $h_n \to 0$, then
  \begin{enumerate}[(i)]
   \item  $E_{P_n^n}[\hat{\mu}_{+,n} | Z^{(n)}] = \mu_{+}(P_n) + h_n^3 e' \Gamma^{-1}_{+,n}(h_n) \nu_{+,n}(h_n) \mu_{+}^3(P_n) / 6 + o_{P_n}(h_n^3)~$.
   \item $V_{P_n^n}[\hat{\mu}_{+,n} | Z^{(n)}] = n^{-1} e' \Gamma^{-1}_{+,n}(h_n) \Psi_{+,n}(h_n,P_n) \Gamma_{+,n}^{-1}(h_n)e \equiv V_{+,n}(h_n,P_n)~$.
   \item $\left( V_{+,n}(h_n,P_n)\right)^{-1/2} \left(\hat{\mu}_{+,n} - E_{P_n^n}[\hat{\mu}_{+,n} | Z^{(n)}]\right) \xrightarrow{d} \mathcal{N}(0,1)~$.
  \end{enumerate}
\end{lemma}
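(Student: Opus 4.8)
The plan is to prove all three parts by conditioning on $Z^{(n)}$, under which $\hat{\mu}_{+,n}$ is linear in $\mathbf{Y}_n$. Since $e'H(h_n)=e'$, the identity $\hat\mu_{+,n}=e'\hat\beta_{+,n}$ gives $\hat\mu_{+,n}=\tfrac1n\sum_{i=1}^n w_{n,i}Y_i$, with the $Z^{(n)}$-measurable scalar weights $w_{n,i}=e'\Gamma^{-1}_{+,n}(h_n)\,1\{Z_i\ge0\}K_{h_n}(Z_i)\,r(Z_i/h_n)$, which satisfy $|w_{n,i}|\le C\,|e'\Gamma^{-1}_{+,n}(h_n)|\,h_n^{-1}\,1\{0\le Z_i\le h_n\}$ for an absolute constant $C$, since $K$ is bounded and supported on $[-1,1]$ and $|r|$ is bounded on $[0,1]$. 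By Lemma~\ref{lem:quantities}(i), with $P_n^n$-probability tending to one $\Gamma_{+,n}(h_n)$ is bounded away from singularity (its limit $\tilde\Gamma_{+,n}(h_n)$ dominates the positive definite $\Gamma_L$), so $\Gamma^{-1}_{+,n}(h_n)=O_{P_n}(1)$; this fact is used throughout.

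Part (ii) is then immediate: conditionally on $Z^{(n)}$ the $Y_i$ are independent with variances $\sigma^2(Z_i,P_n)$, so
\[
  V_{P_n^n}[\hat\mu_{+,n}\mid Z^{(n)}]=\tfrac1{n^2}\,e'\Gamma^{-1}_{+,n}(h_n)\,Z_n(h_n)'W_{+,n}(h_n)\Sigma_n(P_n)W_{+,n}(h_n)Z_n(h_n)\,\Gamma^{-1}_{+,n}(h_n)e,
\]
which equals $V_{+,n}(h_n,P_n)$ after using the symmetry of $W_{+,n}(h_n)$ and the definition of $\Psi_{+,n}(h_n,P_n)$. For part (i), I would take $E_{P_n^n}[\hat\mu_{+,n}\mid Z^{(n)}]=\tfrac1n\sum_i w_{n,i}\mu(Z_i,P_n)$ and Taylor expand $\mu(\cdot,P_n)$ about $0^+$ on $\{Z_i\ge0\}$: $\mu(Z_i,P_n)=\sum_{v=0}^3 \tfrac{\mu^v_+(P_n)}{v!}Z_i^v+R_{n,i}$, where Assumption~4.1 bounds the remainder by $|R_{n,i}|=o(h_n^3)$ uniformly over $i$ on the kernel support. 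The key algebraic identities are $\tfrac1n Z_n(h_n)'W_{+,n}(h_n)(Z_i^v)_{i=1}^n=h_n^v\,\Gamma_{+,n}(h_n)\iota_{v+1}$ for $v=0,1,2$, where $\iota_j$ denotes the $j$-th standard basis vector of $\mathbb{R}^3$ (so $e=\iota_1$), and $=h_n^3\,\nu_{+,n}(h_n)$ for $v=3$. Premultiplying by $e'\Gamma^{-1}_{+,n}(h_n)$, the $v=0$ term returns $\mu_+(P_n)$ exactly, the $v=1$ and $v=2$ terms vanish because $\iota_1'\iota_2=\iota_1'\iota_3=0$, and the $v=3$ term produces the stated bias $h_n^3 e'\Gamma^{-1}_{+,n}(h_n)\nu_{+,n}(h_n)\mu^3_+(P_n)/6$. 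The remainder contributes at most $|e'\Gamma^{-1}_{+,n}(h_n)|\cdot o(h_n^3)\cdot\tfrac1n\sum_i 1\{Z_i\ge0\}K_{h_n}(Z_i)|r(Z_i/h_n)|$, and a first-moment/Markov bound exactly as in the proof of Lemma~\ref{lem:quantities} shows the last average is $O_{P_n}(1)$, so this is $o_{P_n}(h_n^3)$.

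For part (iii), write $\hat\mu_{+,n}-E_{P_n^n}[\hat\mu_{+,n}\mid Z^{(n)}]=\tfrac1n\sum_i w_{n,i}U_i$ with $U_i=Y_i-\mu(Z_i,P_n)$; conditionally on $Z^{(n)}$ these are independent, mean zero, and have conditional variance exactly $V_{+,n}(h_n,P_n)$ by part (ii). I would verify the conditional Lyapunov condition for some $\delta>0$ (the moment condition in Assumption~4.1 ensures $E_{P_n^n}[|U_i|^{2+\delta}\mid Z_i]$ is uniformly bounded): using the bound on $|w_{n,i}|$ above and $\#\{i:0\le Z_i\le h_n\}=O_{P_n}(nh_n)$ (Markov, using the density upper bound), the numerator $\sum_i E_{P_n^n}[|w_{n,i}U_i/n|^{2+\delta}\mid Z^{(n)}]$ is $O_{P_n}(n^{-1-\delta}h_n^{-1-\delta})$; while Lemma~\ref{lem:quantities}(i),(iii), together with $\sigma^2(\cdot,P_n)$ bounded below by Assumption~4.1, give that $V_{+,n}(h_n,P_n)$ exceeds a positive constant multiple of $(nh_n)^{-1}$ with probability tending to one. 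Hence the Lyapunov ratio is $O_{P_n}\big((nh_n)^{-\delta/2}\big)=o_{P_n}(1)$ since $nh_n\to\infty$. On the $P_n^n$-probability-tending-to-one event where this ratio is small, the Lyapunov central limit theorem for triangular arrays of independent summands yields $V_{+,n}(h_n,P_n)^{-1/2}\big(\hat\mu_{+,n}-E_{P_n^n}[\hat\mu_{+,n}\mid Z^{(n)}]\big)\xrightarrow{d}\mathcal{N}(0,1)$ conditionally on $Z^{(n)}$; I would then pass to unconditional convergence by applying dominated convergence to the uniformly bounded conditional characteristic function and invoking L\'evy's continuity theorem. The main obstacle is this last step — handling the randomness of the weights $w_{n,i}$ and converting a conditional CLT into an unconditional one — together with assembling the probabilistic order bounds on $\Gamma^{-1}_{+,n}(h_n)$, on $\#\{i:0\le Z_i\le h_n\}$, and on $V_{+,n}(h_n,P_n)$ that drive the Lyapunov ratio to zero.
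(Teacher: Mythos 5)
Your parts (i) and (ii) follow essentially the same argument as the paper: the same Taylor expansion of $\mu(\cdot,P_n)$ on the kernel support, the same exact algebraic absorption of the $v=0,1,2$ terms through $\Gamma^{-1}_{+,n}(h_n)\Gamma_{+,n}(h_n)$ (the paper writes this as $Z_n(h_n)H(h_n)^{-1}\beta_+(P_n)$ plus the cubic term $h_n^3S_n(h_n)\mu^3_+(P_n)/6$), and the same direct conditional-variance computation. Part (iii), however, is a genuinely different route. The paper never argues conditionally: it replaces the random matrices by their deterministic counterparts, writing the standardized statistic as in \eqref{eq:LFrewrite} with $\xi_n=\sum_i\omega_{n,i}\epsilon_{n,i}$, where $\omega_{n,i}$ is built from the nonrandom $\tilde{\Gamma}_{+,n}(h_n)$ and $\tilde{A}_n$, so the summands are unconditionally independent across $i$; it then checks a fourth-moment Lyapunov/Lindeberg bound, applies the Lindeberg--Feller CLT plus Cram\'er--Wold to get $\xi_n\xrightarrow{d}\mathcal{N}(0,I_3)$, and absorbs the random-matrix discrepancies by Slutsky via $\Gamma^{-1}_{+,n}(h_n)\tilde{\Gamma}_{+,n}(h_n)=I_3+o_{P_n}(1)$ and \eqref{eq:Vratio}. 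You instead keep the random weights $w_{n,i}$, verify a conditional Lyapunov condition, and convert a conditional CLT into an unconditional one. Your route is workable and your order computations (the $O_{P_n}(n^{-1-\delta}h_n^{-1-\delta})$ numerator, the $(nh_n)^{-1}$ lower bound on $V_{+,n}$, hence a Lyapunov ratio of order $(nh_n)^{-\delta/2}$) are right, but the step you yourself flag as the main obstacle is where the extra work sits: for each fixed $n$ the conditioning variable $Z^{(n)}$ changes, so ``apply the Lyapunov CLT conditionally on a high-probability event'' is not literally a limit statement; you need a quantitative version (e.g., a conditional Berry--Esseen/Esseen-inequality bound controlled by the Lyapunov ratio, or a subsequence argument on the conditional characteristic function) before the dominated-convergence-plus-L\'evy step you sketch goes through. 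Also note your $(2+\delta)$ conditional moment requirement is consistent with what the paper actually uses (its Lindeberg check is a fourth-moment bound, i.e.\ $\delta=2$). What the paper's route buys is precisely avoiding the conditional-to-unconditional conversion and the random norming; what yours buys is a more self-contained argument that does not require introducing the deterministic surrogates $\tilde{\Gamma}_{+,n}$, $\tilde{\Psi}_{+,n}$ at the CLT stage.
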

\begin{proof}
For (i), by taking the conditional on $Z^{(n)}$ expectation, we have
\begin{align*}
  E_{P_n^n}[\hat{\mu}_{+,n} | Z^{(n)}] &= e' H(h_n) \Gamma^{-1}_{+,n}(h_n) Z_n(h_n)' W_{+,n}(h_n) E_{P^n_n}[\mathbf{Y}_n | Z^{(n)} ] / n~.
\end{align*}  
Further, as $h_n < \tilde{\kappa}$ for large enough $n$, we have by the required differentiability in Assumption 4.1 and a taylor expansion that
\begin{align*}
  E_{P_n^n}[\mathbf{Y}_n | Z^{(n)}] / n = Z_n(h_n) H(h_n)^{-1} \beta_+(P_n) / n + h_n^3 S_n(h_n) \mu^3_+(P_n) /(6n) + o_{P_n}(h_n^3)~,
\end{align*}
where $\beta_+(P_n) = (\mu_+(P_n),\mu^1_+(P_n),\mu^2_+(P_n)/2)'$. It then follows from Lemma \ref{lem:quantities} that
\begin{align*}
  E_{P_n^n}[\hat{\mu}_+ | Z^{(n)}] = \mu_{+}(P_n) + h_n^3 e \Gamma^{-1}_{+,n}(h_n) \nu_{+,n}(h_n) \mu_{+}^3(P_n) / 3! + o_{P_n}(h_n^3)~.
\end{align*}

For (ii), a simple calculation gives us
\begin{align*}
  V_{P_n^n}[\hat{\mu}_{+}(h_n) | Z^{(n)}] &= e' H(h_n) \Gamma^{-1}_{+,n}(h_n) Z_n(H_n)' W_{+,n}(h_n) \Sigma_n(P_n) W_{+,n}(h_n) Z_n(h_n) \Gamma^{-1}_{+,n}(h_n) H(h_n) e / n^2 \\
    & = n^{-1} e' \Gamma^{-1}_{+,n}(h_n) \Psi_{+,n}(h_n,P_n) \Gamma^{-1}_{+,n}(h_n) e \equiv V_{+,n}(h_n,P_n)~.
\end{align*}

For (iii), first note that from Lemma \ref{lem:quantities} we have $V_{+,n}(h_n,P_n) = \tilde{V}_{+,n}(h_n) + o_{P_n}(1)~,$ where
\begin{align*}
\tilde{V}_{+,n}(h_n) = (nh_n)^{-1} e' \tilde{\Gamma}^{-1}_{+,n}(h_n) \tilde{\Psi}_{+,n}(h_n) \tilde{\Gamma}^{-1}_{+,n}(h_n) e~.
\end{align*}
Then rewrite as follows
\begin{align}\label{eq:LFrewrite}
  \frac{\hat{\mu}_{+,n} - E_{P_n^n}[\hat{\mu}_{+,n} | Z^{(n)}]}{\sqrt{V_{+,n}(h_n,P_n)}} = \left( \frac{\tilde{V}_{+,n}(h_n,P_n)}{V_{+,n}(h_n,P_n)} \right)^{1/2} \left( \tilde{V}_{+,n}(h_n) \right)^{-1/2} e' \Gamma^{-1}_{+,n}(h_n) \tilde{\Gamma}_{+,n}(h_n)\tilde{A}_n^{1/2}  \xi_{n}~,
\end{align}
where 
\begin{align*}
\xi_{n}          &=\sum^{n}_{i=1} \omega_{n,i} \epsilon_{n,i} ~, \\
\epsilon_{n,i}     &= Y_i - E_{P_n}[Y_i|Z_i]~, \\
\tilde{A}_n   &= (nh_n)^{-1} \tilde{\Gamma}^{-1}_{+,n}(h_n) \tilde{\Psi}_{+,n}(h_n) \tilde{\Gamma}^{-1}_{+,n}(h_n) ~, \text{ and} \\
\omega_{n,i} &=  \tilde{A}_n^{-1/2} \tilde{\Gamma}^{-1}_{+,n}(h_n) K_{h_n}(Z_i/h_n) r(Z_i/h_n) / n~.
\end{align*}
Next note that for any $a \in \mathbf{R}^3$ we have that $\{a' \omega_{n,i} \epsilon_{n,i} : 1 \leq i \leq n\}$ is a triangular array of independent random variables with $E_{P_n^n}[a' \xi_{n}] = 0$ and $V_{P_n^n}[a' \xi_n] = a'a$. Further, this triangular array satisfies the Lindeberg condition. To see why, first note that by Lemma \ref{lem:quantities} we have
\begin{align}\label{eq:varbound}
 |\tilde{A}_{n}| \geq (nh_n)^{-1}|\tilde{A}_L|~,
\end{align}
for some value $\tilde{A}_{L} \in \mathbf{R}$, which is uniform in $\mathbf{P}$. We then have in addition to by Lemma \ref{lem:quantities} and a change of variables that
\begin{align*}
  \sum^{n}_{1=1}E_{P_n^n}[|\omega_{n,i} \epsilon_i|^4] &\precsim (nh_n)^2 \sum^{n}_{1=1}E_{P_n^n} \left[ \left|   K_{h_n}(Z/h_n) r(Z/h_n) / n  \right|^4  \right] \\
  & \precsim (nh_n)^2 n^{-3} h_n^{-4} \int^{\infty}_{0} \left|  K(z/h_n) r(z/h_n)  \right|^4 f_{P_n}(z) dz \\
  & \precsim  (nh_n)^2 n^{-3} h_n^{-3} = O\left( (nh_n)^{-1} \right) = o(1)~
\end{align*}
and hence, using the Lindeberg-Feller CLT, we have that $a' \xi_n \xrightarrow{d}  \mathcal{N}(0,a'a)$ as $n \to \infty$. Since this holds for any $a \in \mathbf{R}^3$, the Cramer-Wold theorem implies that $\xi_n \xrightarrow{d}  \mathcal{N}(0,I_3)$ as $n \to \infty$, where $I_3$ denotes the identity matrix of size three. Furthermore, analogous to $V_{+}(h_n,P_n) = \tilde{V}_{+}(h_n) + o_{P_n}(1)$, we can show that
\begin{align}\label{eq:Vratio}
  \frac{V_{+,n}(h_n,P_n)}{\tilde{V}_{+,n}(h_n)}  = 1 + o_{P_n}(1)~.
\end{align}
Further, by Lemma \ref{lem:quantities} we have that
\begin{align}
  \Gamma^{-1}_{+,n}(h_n) \tilde{\Gamma}_{+,n}(h_n)  = I_3 + o_{P_n}(1)~.
\end{align}
Substituting the above results in \eqref{eq:LFrewrite} concludes the proof.
\end{proof}

\section{Proof of Theorem 4.2}

Here we show only that
\begin{align*}
  \frac{\hat{\mu}_{+,n} - \mu_{+}(P_n)}{\sqrt{\hat{V}_{+,n}}} \xrightarrow{d} \mathcal{N}(0,1)~,
\end{align*}
since under similar arguments it will follow that
\begin{align*}
  \frac{\hat{\mu}_{n,-} - \mu_{-}(P_n)}{\sqrt{\hat{V}_{n,-}}} \xrightarrow{d} \mathcal{N}(0,1)~,
\end{align*}
and then due to independence we can conclude that $T_{n}^{CCT} \xrightarrow{d} \mathcal{N}(0,1)$ as $n \to \infty$. To this end, first rewrite
\begin{align*}
  \frac{\hat{\mu}_{+,n} - \mu_{+}(P_n)}{\sqrt{\hat{V}_{+,n}}} = \frac{\hat{\mu}_{+,n} - \mu_{+}(P_n)}{\sqrt{V_{+,n}(h_n,P_n)}} \cdot \sqrt{\frac{V_{+,n}(h_n,P_n)}{\hat{V}_{+,n}}}~.
\end{align*}

\underline{Step 1.} We show that 
\begin{align}\label{eq:step1}
  \frac{\hat{\mu}_{+,n} - \mu_{+}(P_n)}{\sqrt{V_{+,n}(h_n,P_n)}} \xrightarrow{d} \mathcal{N}(0,1)~.
\end{align}
To begin, first rewrite the above as
\begin{align*}
  \frac{\hat{\mu}_{+,n} - E_{P_n^n}[\hat{\mu}_{+,n} | Z^{(n)}]}{\sqrt{V_{+,n}(h_n,P_n)}} + \left( \frac{\tilde{V}_{+,n}(h_n)}{V_{+,n}(h_n,P_n)} \right)^{1/2} \frac{E_{P_n^n}[\hat{\mu}_{+,n} | Z^{(n)}] - \mu_{+}(P_n)}{\sqrt{\tilde{V}_{+,n}(h_n)}}~.
\end{align*}
In Lemma \ref{lem:terms} (iii), we showed that
\begin{align*}
  \frac{\hat{\mu}_{+,n} - E_{P_n^n}[\hat{\mu}_{+,n} | Z^{(n)}]}{\sqrt{V_{+,n}(h_n,P_n)}} \xrightarrow{d} \mathcal{N}(0,1)~\text{ and }~\frac{\tilde{V}_{+,n}(h_n)}{V_{+,n}(h_n,P_n)} = 1 + o_{P_n}(1)~.
\end{align*}
It then remains to show that
\begin{align*}
  \frac{E_{P_n^n}[\hat{\mu}_{+,n} | Z^{(n)}] - \mu_+(P_n)}{\sqrt{\tilde{V}_{+,n}(h_n)}} = o_{P_n}(1)~,
\end{align*}
to conclude. To this end, note that by Lemma \ref{lem:terms}, Lemma \ref{lem:quantities} and \eqref{eq:varbound}, it follows that
\begin{align*}
  \frac{\left| E_{P_n^n}[\hat{\mu}_{+,n} | Z^{(n)}] - \mu_+(P_n) \right|}{\sqrt{\tilde{V}_{+,n}(h_n)}} = O\left((nh_n)^{-1/2}\right) \left\{ O(h_n^3) + o_{P_n}(1) \right\} = O\left((nh_n^7)^{1/2}\right) + o_{P_n}(1) = o_{P_n}(1)
\end{align*}
as $h_n \to 0$, $nh_n \to \infty$ and $nh_n^7 \to 0$.

\underline{Step 2.} We show that
\begin{align}
  \frac{V_{+,n}(h_n,P_n)}{\hat{V}_{+,n}} = 1 + o_{P_n}(1)~.
\end{align}
To begin note that
\begin{align}
 nh \left( V_{+,n}(h_n,P_n) - \hat{V}_{+,n} \right)=  e' \Gamma^{-1}_{+,n}(h_n) \cdot  h\left( \Psi_{+,n}(h_n,P_n) - \hat{\Psi}_{+,n}(h_n) \right) \cdot \Gamma^{-1}_{+,n}(h_n) e~,
\end{align}
where
\begin{align}\label{eq:Vbig}
 h  \left(\Psi_{+,n}(h_n,P_n) - \hat{\Psi}_{+,n}(h_n)\right) = h Z_n(h_n)'W_{+,n}(h_n) \left(\Sigma_n(P_n) - \hat{\Sigma}_n \right) W_{+,n}(h_n) Z_n(h_n) / n ~,
\end{align}
and
\begin{align}\label{eq:vest}
  \hat{\Sigma}_{+,n}  = \text{diag}(\hat{\epsilon}^2_{+,n,1}, \ldots, \hat{\epsilon}^2_{+,n,n})~,
\end{align}
such that $\hat{\epsilon}_{+,n,i} = Y_i - \hat{\mu}_{+,n}$. Further, note that by construction, we can write
\begin{align}
  Y_i = \mu(Z_i,P_n) + \epsilon_{n,i}~,
\end{align}
such that $E_{P_n}[\epsilon_{n,i}] = 0$ and $Var_{P_n}[\epsilon_{n,i}|Z=z] = \sigma^{2}(z,P_n)$. This in turn implies
\begin{align}
  \hat{\epsilon}_{+,n,i} = \epsilon_{n,i} + \mu(Z_i,P_n) - \mu_+(P_n) + \mu_+(P_n) - \hat{\mu}_{+,n}~.
\end{align}
We can then expand \eqref{eq:Vbig} to get the following 
\begin{align}
  h \left(\Psi_{+,n}(h_n,P_n) - \hat{\Psi}_{+,n}(h_n) \right)=& \underbrace{h \sum^n_{i=1} 1\{ Z_i \geq 0 \} (\sigma^2(Z_i,P_n) - \epsilon_{n,i}^2) K_{h_n}(Z_i)^2 r(Z_i / h_n) r(Z_i / h_n)' / n}_{\equiv B_{1,n}~,~ \text{(a)}} \nonumber \\
  & - \underbrace{h \sum^n_{i=1} 1\{ Z_i \geq 0 \} ( \mu(Z_i,P_n) - \hat{\mu}_{+,n})^2 K_{h_n}(Z_i)^2 r(Z_i / h_n) r(Z_i / h_n)' / n}_{\equiv B_{2,n}~,~ \text{(b)}} \nonumber \\
  & + 2 \underbrace{h \sum^n_{i=1} 1\{ Z_i \geq 0 \} \epsilon_{n,i} ( \mu(Z_i,P_n) - \hat{\mu}_{+,n}) K_{h_n}(Z_i)^2 r(Z_i / h_n) r(Z_i / h_n)'/ n}_{\equiv B_{3,n}~,~ \text{(c)}}~. \nonumber
\end{align}
For quantity (a), since Assumption 2.1 (i), Assumption 2.1 (ii) and Assumption 2.1 (iv) are satisfied with the required uniform constants, we have by a change of variables that
\begin{align*}
  E_{P_n}\left[ | B_{1,n} |^2  \right] &\precsim (nh)^{-1}\int^{\infty}_{0} K(u)^4 |r(u)|^4  du \\
  &= O((nh)^{-1})  =  o(1)~,
\end{align*}
which implies by Markov's Inequality that $B_{n,1}=o_{P_n}(1)$. For quantity (b), note that first we can rewrite it as
\begin{align*}
  B_{n,2} =& \underbrace{h \sum^n_{i=1} 1\{ Z_i \geq 0 \} ( \mu(Z_i,P_n) - \mu_{+}(P_n))^2 K_{h_n}(Z_i)^2 r(Z_i / h_n) r(Z_i / h_n)' / n}_{\equiv B_{n,21}} \\
  & + ( \mu_{+}(P_n) - \hat{\mu}_{+,n})^2 \cdot \underbrace{h \sum^n_{i=1} 1\{ Z_i \geq 0 \} K_{h_n}(Z_i)^2 r(Z_i / h_n) r(Z_i / h_n)' / n}_{\equiv B_{n,22}} \\
  &+ 2 ( \mu_{+}(P_n) - \hat{\mu}_{+,n}) \cdot \underbrace{h \sum^n_{i=1} 1\{ Z_i \geq 0 \} ( \mu(Z_i,P_n) - \mu_{+}(P_n))  K_{h_n}(Z_i)^2 r(Z_i / h_n) r(Z_i / h_n)' / n}_{\equiv B_{n,23}}~,
\end{align*}
Next, since Assumption 2.1 (i) and Assumption 2.1 (iii) are satisfied with the required uniform constants, we have by a taylor approximation and a change of variables that
\begin{align*}
 E_{P_n}[ | B_{n,21}|^2] &\precsim n^{-1} h^3 \int^{\infty}_{0} K(u)^4 |r(u)|^4  du \\
  &= O(n^{-1} h^3)  =  o(1)~,
\end{align*}
which implies by Markov's inequality that $B_{n,21} = o_{P_n}(1)$. Further, since  Assumption 2.1 (i) is satisfied with the required uniform constants, we have by a change of variables that
\begin{align*}
  E_{P_n}[ | B_{n,22}|^2] &\precsim (nh)^{-1} \int^{\infty}_{0} K(u)^4 |r(u)|^4  du \\
  &= O((nh)^{-1})  =  o(1)~,
\end{align*}
which implies by Markov's inequality that $B_{n,22} = o_{P_n}(1)$. Finally, since Assumption 2.1 (i) and Assumption 2.1 (iii) are satisfied with the required uniform constants, we have by a taylor approximation and a change of variables that
\begin{align*}
  E_{P_n} \left[ |  B_{n,23} |^2\right] &\precsim (n)^{-1} h \int^{\infty}_{0} K(u)^4 |r(u)|^4  du \\
  &= O(n^{-1} h)  =  o(1)~,
\end{align*}
which implies by Markov's inequality that $B_{n,23} = o_{P_n}(1)$. Since $\mu_{+}(P_n) - \hat{\mu}_{+,n} = o_{P_n}(1)$ by \eqref{eq:step1}, we can conclude for quantity (b) that $B_{n,2} = o_{P_n}(1)$. For quantity (c), using analogous arguments, we can conclude that $B_{n,3} = o_{P_n}(1)$, and hence
\begin{align}
  h \left(\Psi_{+,n}(h_n,P_n) - \hat{\Psi}_{+,n}(h_n) \right) = o_{P_n}(1)~.
\end{align}
In addition, since from Lemma \ref{lem:quantities} we have that $\Gamma^{-1}_{+,n}(h_n) = \tilde{\Gamma}^{-1}_{+,n}(h_n)$, it then follows that
\begin{align}\label{eq:Vend}
  nh \left( V_{+,n}(h_n,P_n) - \hat{V}_{+,n} \right) = o_{P_n}(1)~.
\end{align}

To conclude, first rewrite \eqref{eq:Vend} as
\begin{align*}
  \frac{ V_{+,n}(h_n,P_n) - \hat{V}_{+,n}}{\tilde{V}_{+,n}(h_n) } = o_{P_n}(1)~,
\end{align*}
and our result then follows from \eqref{eq:Vratio}.

\clearpage
\bibliography{referencessupp}
\nocite{*}

\clearpage